\pgfplotsset{compat=newest}
\newcommand{\bmat}[1]{\left[\begin{array}{#1}}
	\newcommand{\emat}{\end{array}\right]}
\newtheorem{corollary}{Corollary}
\newtheorem{lemma}{Lemma}
\newtheorem{proposition}{Proposition}
\newenvironment{proof}[1][Proof]{\noindent\textsc{#1:} }{\ \rule{0.5em}{0.5em}}
\begin{document}\onehalfspacing
	
		\title{Never-ending Search for Innovation}
	
	\author{Jean-Michel Benkert and Igor Letina\thanks{Benkert: Department of Economics, University of Bern. Letina: Department of Economics, University of Bern and CEPR. We thank Georg N\"{o}ldeke for very valuable comments and Martial Zurbriggen for excellent research assistance.}}

    \date{February 2025}
	\maketitle
	
	\begin{abstract} 

\noindent We provide a model of investment in innovation that is dynamic,  features multiple heterogeneous research projects of which only one potentially leads to success, and in each period, the researcher chooses the set of projects to invest in. We show that if a search for innovation starts, it optimally does not end until the innovation is found---which will be never with a strictly positive probability.

\vspace*{0.6cm}
		
		\noindent \textbf{Keywords:} Innovation, Optimal Search, Infinite Horizon

\vspace*{.1cm}

		\noindent \textbf{JEL Codes:}  D$83$, O$31$ 
	\end{abstract}

\vspace*{.1cm}
 
	\newpage

\section{Introduction}

In 1880, Alphonse Laveran discovered that malaria was caused by a parasite from the \emph{Plasmodium} genus. This discovery immediately led to the next question: how was \emph{Plasmodium} transmitted from one person to another? To answer such a question, researchers could design and conduct various experiments to test the many potential vectors, including transmission via soil, water, or a living host such as a bird or---which ultimately proved true---a mosquito \citep{desowitz1991malaria}. For example, researchers could expose healthy subjects to infected water to determine if the disease could be transmitted that way, or allow mosquitoes to feed on infected individuals and then examine the insects for the presence of the parasite.\footnote{These are some of the experiments conducted by Ronald Ross, who in 1897 demonstrated that mosquitoes transmitted malaria.} There are countless experimental variations, such as the species of mosquito used, the time elapsed after infection before examination, the specific parts of the mosquito dissected, and the techniques employed to detect the parasite. Researchers, aiming to find the answer as efficiently as possible, would pursue the most promising and least costly experiments first, and move on to more challenging, expensive, or less likely experimental avenues if success initially proves elusive. 

The above example has two important features. First, given a set of potential vectors, discovering that one is not responsible for transmission is informative about the probability that a remaining vector is how the disease actually spreads. For instance, if the parasite is not found in the soil, the researcher may update her belief that the available methods are unable to detect the parasite, leading her to become more pessimistic about the probability of finding the parasite in, for example, water. At the same time, conditional on being able to detect the parasite and the parasite being present in one environment only, then failing to detect the parasite in soil increases the belief of the researcher that she will detect it in water. Second, the cost of testing different vectors varies greatly. For example, examining soil samples is cheap, while dissecting living hosts is costly. 

We propose a model exhibiting both of the above features, informational externalities and heterogeneous projects. In doing so, we go beyond the existing literature and we find that, if the search starts, it does not end until the answer is found. In particular, the search will go on forever with strictly positive probability. This is in sharp contrast with the existing results in the literature, where the search tends to end in finite time even if the optimal search rule prescribes to continue searching forever.\footnote{This holds true in the seminal search models (without informational externalities) such as \citet{mccall1965economics} and \citet{morgan1983search}. See \citet{lippman1976economics} and \citet{chade2017sorting} for surveys, or for more recent examples see \citet{benkert2018optimal} and \citet{mccardle2018abandon}. In this class of models, even if the optimal search rule prescribes searching forever, the probability that search continues forever will converge to zero, thus ending search in finite time.} 

We study a model where the agent is faced with a set of research projects of which only one has the potential to lead to the breakthrough. Importantly, our model features learning, as failure from one set of projects is informative about the chances that a success will be found in the remaining set. Our model is closely related to \citet{deroos2018shipwrecks}, who also consider a model where an agent searches in a set of research projects. Importantly, they assume that all projects are homogeneous, and that a successful project exists with certainty. When this is the case, then the optimal search prescribes that the entire set of available projects is examined in finite time, causing the search to also end in finite time.\footnote{Using a similar model as \citet{deroos2018shipwrecks}, \citet{matros2014treasure} and \citet{matros2016duplicative} focus on the strategic interaction between multiple searchers. In extensions, they consider heterogeneous projects (in terms of success probabilities rather than costs) and allow for no successful project to exist. Nevertheless, they always conclude that search ends in finite time, contrasting with our findings.} We instead assume that projects are sufficiently heterogeneous, so that a rational agent optimally never examines the entire set of available projects. Paired with the possibility that no project could be successful, we obtain the contrasting result that optimal search may never end. 

Heterogeneity of research project is not sufficient for the search to potentially remain active forever. This can be best seen in \citet{weitzman1979optimal}, who famously showed that optimal search ends---in finite time---when some reservation value is met. In \citet{weitzman1979optimal}, opening one box is not informative about the value of yet unopened boxes. Thus, after the agent has exhausted all the promising boxes, she stops. In contrast, Bayesian updating leads our agent to become more optimistic about the value of remaining projects, and, as we show, this causes the agent to never willingly abandon an active search.  

It is instructive to compare our framework with the class of bandit models, which are often used to model dynamic innovation.  In a standard bandit model,\footnote{As an example of such a model used in the context of innovation see \cite{bergemann2005financing}.} the searcher faces a project (or an arm) with an unknown return. The project is either good, so that investing in it yields a success with some probability, or bad, so that any investment is wasted.  In these models, the absence of a success is informative and leads to increasing pessimism: as failures accumulate, the searcher lowers her belief about the probability that she is facing a good project, until she becomes so pessimistic that she stops searching, which happens in finite time. Our model features (infinitely) many heterogeneous projects, and the structure of the model is such that becoming more pessimistic about one project necessarily leads the searcher to become more optimistic about the unexplored projects. This positive informational spillover across projects, together with project heterogeneity, generates a never-ending search in our model, while a standard bandit model results in search ending in finite time.

In what follows, we formally introduce our model (Section \ref{sec:model}), present our result on potentially never-ending search (Section \ref{sec:results}), and conclude with a discussion of the result and its policy implications (Section \ref{sec:conclusion}).

\section{Model}\label{sec:model}

There is a set of heterogeneous research projects $J= [0, 1)$ and a single agent searching for an innovation in $J$. In any period $t = 1, 2, \ldots$, the agent can examine an arbitrary (measurable) set of projects $S \subseteq J$ at the cost $ C(S) = \int_S c(j) dj$. We assume that $c(j)$ is continuous, strictly increasing, and satisfies $\lim_{j \rightarrow 1} c(j) = \infty$. 

With probability $p \in (0,1)$, the innovation is \emph{feasible}, meaning that there is a single project $\hat{j} \in J$, which, if examined, will lead to a successful innovation of value $v$. Examining any other project leads to a dead end. We assume that any project $j \in J$ is equally likely to yield success. The future costs and payoffs are discounted by $\delta\in (0,1)$.\footnote{This is a dynamic version of the model introduced in \citet{letina2016road}.}

\section{Never-ending Search} \label{sec:results}

An agent's search strategy $\sigma = (L_1, L_2, \ldots)$ consists of a sequence of (potentially empty) measurable sets $L_t\subseteq[0, 1)\cup \varnothing$, $t=1, 2, \ldots$ determining which projects will be examined in period $t$ if previous periods did not yield success. We say that a search strategy $\sigma$ is \emph{interval-based} if the subsets $L_t$ are intervals. Moreover, we say that a search strategy is \emph{increasing-interval-based} if the subsets $L_t$ are intervals such that $\sup L_t = \inf L_{t+1}$.

Let $S_t := \cup_{t' < t} L_{t'}$ and $U_t := [0,1) \setminus S_t$. Thus, $S_t$ represents the set searched up to time $t-1$, and $U_t$ is the set still unsearched when deciding at time $t$. Therefore, any optimal search in period $t$ will occur in $U_t$.

Let $\hat{p}_\sigma(t) := \Pr[\hat j \in U_t\mid \hat j  \not \in S_t]$ denote the agent's Bayesian posterior belief that success is feasible at time $t$ when following the strategy $\sigma$, i.e., $\hat{p}_\sigma(t) = \frac{\mu(U_t)p}{\mu(U_t)p + (1-p)}$, where $\mu$ denotes the Lebesgue measure. Then, we can write the agent's value function at time $t$, when $S_t$ has been searched so far and when searching according to strategy $\sigma = (L_1, L_2, \ldots)$, as
\begin{align*}
	V_\sigma(S_t, t) =  \hat{p}_\sigma(t) \frac{\mu(L_t)}{\mu(U_t)} v - C(L_t)+ \delta \left(1-\hat{p}_\sigma(t) \frac{\mu(L_t)}{\mu(U_t)}\right)V_\sigma(S_t\cup L_t,t+1).
\end{align*}

We can now state our result, the proof of which is relegated to the appendix.
\begin{proposition}\label{prop1}
	The optimal search strategy $\sigma^\ast$ is such that either $L_t= \emptyset$ for all $t$ or $\mu(L_t)>0$ for all $t$.
\end{proposition}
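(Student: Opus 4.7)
The plan is to show that if $\sigma^*$ is optimal and $\mu(L_{t_0}^*) > 0$ for some $t_0$, then $\mu(L_t^*) > 0$ for every $t$; the other case of the dichotomy requires nothing. Two reductions and a local deviation argument will suffice.

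First, I would exploit the time-homogeneity of the value: $V_\sigma(S, t)$ depends on $t$ only through the tail $(L_t, L_{t+1}, \ldots)$, so prepending an empty period to any sub-strategy simply multiplies its value by $\delta$. Hence, if $\sigma^*$ contains an interior empty period $L_{t^*}^* = \emptyset$, then the shifted strategy that deletes it changes the total value by $\delta^{t^*-1}(1-\delta)\,V(S_{t^*}, (L_{t^*+1}^*, L_{t^*+2}^*, \ldots))$. Optimality of $\sigma^*$ forces this continuation to be nonnegative---otherwise truncating $\sigma^*$ at $t^*-1$ strictly improves it---and hence the shift is weakly improving, strictly so whenever the continuation has strictly positive value. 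Iterating, I may take $\sigma^*$ to have the canonical form $(L_1^*, \ldots, L_T^*, \emptyset, \emptyset, \ldots)$ with $\mu(L_t^*) > 0$ for $t \le T$, where $T \in \{0, 1, \ldots\} \cup \{\infty\}$. The cases $T = 0$ and $T = \infty$ are the two alternatives of the proposition, so I assume $1 \le T < \infty$ and seek a contradiction.

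Second, I would pin down $U_{T+1}$. Since the continuation after $T$ is zero under $\sigma^*$, optimality of $L_T^*$ collapses to the static problem of maximizing $\int_L (\tau_T - c(j))\, dj$ over $L \subseteq U_T$, where $\tau_T := pv/[\mu(U_T)p + (1-p)]$. Its maximizer, up to a null set, is $L_T^* = \{j \in U_T : c(j) \le \tau_T\}$, so $U_{T+1} = \{j \in U_T : c(j) > \tau_T\}$. A cheapest-first exchange argument---each point of $[0,1)$ is a priori equally likely to succeed while $c$ is strictly increasing, so coupling the unresolved randomness across any swap shows that replacing a searched point $j$ by an unsearched cheaper $j' < j$ strictly increases the total value---forces the searched set in any optimal strategy to be an initial segment up to null sets, i.e., $U_t = [s_t, 1)$ for each $t$. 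Combining, $U_{T+1} = [c^{-1}(\tau_T), 1)$.

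Third, I would derive the contradiction. Since $\mu(L_T^*) > 0$, we have $\mu(U_{T+1}) < \mu(U_T)$, hence $\tau_{T+1} := pv/[\mu(U_{T+1})p + (1-p)] > \tau_T$ strictly. Continuity and strict monotonicity of $c$, together with $U_{T+1} = [c^{-1}(\tau_T), 1)$, imply that $L' := [c^{-1}(\tau_T), c^{-1}(\tau_{T+1}))$ is a nondegenerate subinterval of $U_{T+1}$ on which $\tau_{T+1} - c(j) > 0$ almost everywhere. The one-shot deviation ``search $L'$ at $T+1$, then stop'' therefore yields strictly positive value $\int_{L'}(\tau_{T+1} - c(j))\, dj > 0$, contradicting the zero-valued continuation prescribed by $\sigma^*$ at $T+1$. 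Hence $T = \infty$. I expect the main obstacle to be the cheapest-first reduction in the second stage: the exchange inequality must be shown to survive along the full continuation of the strategy, not merely the immediate period's cost, which calls for a careful coupling of the residual randomness. Once $U_t = [s_t, 1)$ is established, the third-stage contradiction is immediate from the strict jump $\tau_{T+1} > \tau_T$ created by the positive measure of $L_T^*$ together with the continuity of $c$.
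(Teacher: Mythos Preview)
Your proposal is correct and follows essentially the same route as the paper: establish that the optimal strategy is increasing-interval-based (your ``cheapest-first'' exchange is the paper's Lemmas~1 and~2), rule out interior empty periods (your time-homogeneity shift is the paper's Lemma~3), and then derive the contradiction at a putative last period $T$ by noting that the posterior jump $\tau_{T+1}>\tau_T=c(l_T)$ makes a small additional interval profitable at $T{+}1$. The only cosmetic differences are the order of the reductions and that you characterise $L_T^\ast$ via the set $\{j:c(j)\le\tau_T\}$ where the paper takes a first-order condition; the resulting threshold $l_T=c^{-1}(\tau_T)$ and the ensuing contradiction are identical.
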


Intuitively, there are two forces leading to the result. First, the informational externality of not finding the innovation induces the agent to continue searching. Second, the increasing costs of searching dampen the incentives to continue the search. The former effect ensures that the search continues, whereas the latter effect (and in particular the assumption that $\lim_{j \rightarrow 1} c(j) = \infty$) prevents the search from being exhaustive. 

More specifically, if, contrary to our result, there were a final period $T$ in which the search occurred, the agent would want to search all projects with a positive expected value in that period, implying that the marginal project would have an expected value of zero. Because of the increasing costs of search, that marginal project would have to be interior. However, if there was no success in period $T$, the belief that the marginal project will succeed would increase, implying a strictly positive expected value in the neighborhood of the marginal project. Thus, stopping the search could not be optimal. Further, as the remaining projects are increasingly costly to search, exhaustively searching all remaining projects and ending the search in finite time is never optimal. Hence, once started, the search only stops if the innovation is discovered. Since the innovation is not feasible with probability $1-p>0$, the search could continue forever with a strictly positive probability, leading to the following corollary. 

\begin{corollary}
    If any search is optimal, the probability of active search in any period $t\geq1$ is strictly above $1-p>0$.
\end{corollary}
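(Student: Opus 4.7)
The plan is to interpret ``active search in period $t$'' as the event that no success has yet been realized at the start of period $t$ (so the agent, operating an optimal strategy, still has a nonempty set to examine), and to compute its probability by conditioning on whether the innovation is feasible. The key input will be Proposition~\ref{prop1}: in the nontrivial case where some search is optimal, it guarantees $\mu(L_t)>0$ for every $t$, which I would leverage to conclude $\mu(U_t)>0$.

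First I would unpack the hypothesis. The condition ``any search is optimal'' rules out the trivial branch of Proposition~\ref{prop1}, leaving us in the regime $\mu(L_t)>0$ for all $t$. Because $L_t\subseteq U_t=[0,1)\setminus S_t$, this immediately forces $\mu(U_t)>0$, equivalently $\mu(S_t)<1$, at every finite $t$.

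Next I would decompose the probability of active search in period $t$ via the law of total probability on feasibility. With probability $1-p$ the innovation is infeasible, so no success ever arises and the search remains active. With probability $p$ the innovation is feasible and, conditional on this, $\hat j$ is uniform on $[0,1)$, so the search is still active in period $t$ precisely when $\hat j\in U_t$, an event of probability $\mu(U_t)$. Assembling these two cases gives
\begin{equation*}
\Pr[\text{search active in period }t]=(1-p)+p\,\mu(U_t)>1-p>0,
\end{equation*}
where the first strict inequality uses $p>0$ together with $\mu(U_t)>0$ from the previous step, and the second uses $p<1$ from the model setup.

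There is no serious obstacle here: the only nontrivial ingredient is Proposition~\ref{prop1}, which prevents $[0,1)$ from being exhausted in finite time. Once that is in hand, the corollary reduces to direct bookkeeping using the uniform prior on $\hat j$ conditional on feasibility.
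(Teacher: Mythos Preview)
Your proposal is correct and matches the paper's approach: the paper offers only the one-sentence justification that search stops only upon success and the innovation is infeasible with probability $1-p$, and your argument formalizes exactly this by conditioning on feasibility and invoking Proposition~\ref{prop1} to ensure $\mu(U_t)>0$.
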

This corollary highlights the distinction of our result relative to the existing literature, where even if the optimal strategy is to search forever, the search ends in finite time. In contrast, our result is not only about the optimal strategy mandating infinite search but that the actual search may last forever. 

Finally, for our results to be meaningful, we need to establish conditions under which some search is indeed optimal.

\begin{lemma} \label{lem:search_atall}
    If $p v > c(0)$, it is optimal for the agent to engage in the search.
\end{lemma}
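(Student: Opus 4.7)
The plan is to exhibit a single-period deviation from the no-search strategy and show it yields strictly positive expected value whenever $pv > c(0)$, which then implies that some search is optimal (the no-search strategy gives value $0$).

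Concretely, consider the strategy $\sigma_\eps$ which searches $L_1 = [0, \eps]$ in period $1$ for some small $\eps > 0$, and then stops forever ($L_t = \varnothing$ for all $t \geq 2$). At $t=1$ we have $S_1 = \varnothing$, $U_1 = [0,1)$, so $\hat p_{\sigma_\eps}(1) = p$ and $\mu(L_1)/\mu(U_1) = \eps$. Substituting into the value function, and using that the continuation value after stopping is $0$, gives
\begin{equation*}
V_{\sigma_\eps}(\varnothing, 1) \;=\; p\,\eps\, v \;-\; \int_0^\eps c(j)\,dj.
\end{equation*}
I would then divide by $\eps$ and let $\eps \to 0$: by the continuity of $c$, $\tfrac{1}{\eps}\int_0^\eps c(j)\,dj \to c(0)$, so
\begin{equation*}
\lim_{\eps\to 0^+} \frac{V_{\sigma_\eps}(\varnothing,1)}{\eps} \;=\; pv - c(0) \;>\; 0
\end{equation*}
by the hypothesis. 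Hence for all sufficiently small $\eps > 0$, $V_{\sigma_\eps}(\varnothing, 1) > 0$, which strictly dominates the zero payoff from never searching. Therefore the optimal strategy $\sigma^\ast$ cannot be the no-search strategy, and by Proposition~\ref{prop1} it must have $\mu(L_t) > 0$ for all $t$, i.e., the agent engages in search.

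There is no real obstacle here; the only subtle point is ensuring that the continuation after searching $[0,\eps]$ is bounded below by $0$, which is immediate because the agent can always choose to stop, and that the cost integral is well-defined and tends to $c(0)\eps$ by continuity of $c$. The strict inequality $pv > c(0)$ is used precisely to make the leading-order net benefit positive.
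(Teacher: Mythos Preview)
Your argument is correct and matches the paper's own (informal) justification: the paper notes in a footnote that in the one-period problem the first-order condition $pv = c(q^\ast)$ pins down an optimal search length $q^\ast > 0$ whenever $pv > c(0)$, which is exactly the marginal-benefit-versus-marginal-cost comparison you carry out via the limit $\lim_{\eps \to 0^+} V_{\sigma_\eps}(\varnothing,1)/\eps = pv - c(0)$. The two arguments are essentially the same one-period deviation idea, with your version spelling out the limiting step explicitly rather than invoking the first-order condition.
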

It is immediate that this condition ensures that the agent would rather search once than not search at all.\footnote{In the one-period model, searching an interval of length $q^\ast$ is optimal if and only if $pv=c(q^\ast)$. Hence, the condition in Lemma \ref{lem:search_atall} yields $q^\ast>0$, ensuring that the agent would rather search once than not search at all.} Intuitively, the agent compares the marginal benefit of searching the initial interval with the marginal cost of doing so. The condition ensures that the former exceeds the latter, so that search starts. However, by Proposition \ref{prop1} we know that the agent would not cease searching after one period but continue until a success is found.

\section{Conclusion}  \label{sec:conclusion}

At first glance, our result may seem counterintuitive: why would a rational agent be willing to keep (forever) searching for a prize of finite value at eventually unbounded marginal costs? To reconcile our result with this (incomplete) intuition, we need to also consider the search \emph{intensity}. Even as marginal costs increase, the agent can adapt the intensity of search so that actual search costs do not increase from period to period. Effectively, while active search may never end, the associated search intensity will converge to zero in the limit because of the unbounded increase in the marginal search costs. 

This is not merely a technical point, but rather has an important policy implication for the question of when to optimally abandon a (research) project. \citet{mccardle2018abandon} note that, in practice, search often lasts longer than what is \emph{perceived} as optimal and suggest that one should set an ``alarm clock'' to ensure a timely end to the search effort. However, our results indicate that such a policy may be short-sighted. What appears to be excessive search from an external perspective may in fact be optimal, as the agent internalizes the informational value of continued failures and adjusts search intensity accordingly. An outside decision-maker, who does not fully account for these learning dynamics, may wrongly conclude that search has gone on ``too long'' and intervene prematurely. This is particularly relevant in environments with informational spillovers (so that failure today is informative of success chances tomorrow) and adjustable research intensity, which we would argue are more common than not in research. Moreover, our result suggests that the debate over whether research lasts ``too long'' or stops ``too soon'' is too simplistic. Rather than a binary choice, search intensity can decline over time while still continuing indefinitely. To address this empirically, our finding highlights the need to track not just whether research is ongoing, but how its intensity evolves.

In short, the main message of our paper is that external decision makers should exercise caution when recommending that a research project be abandoned. What might appear to be the stubbornness of a researcher unwilling to admit defeat could, in fact, be a rational response to the lessons learned from past failures.


\newpage 

\section*{Appendix: Proof of Proposition \ref{prop1}}

\begin{lemma}\label{lem:nogaps}
The optimal search strategy $\sigma^\ast$ satisfies: there do not exist intervals $L$ and $L''$ with $\mu(L), \mu(L'') > 0$ and $\inf L > \sup L''$ such that $L$ is searched and $L''$ is not.
\end{lemma}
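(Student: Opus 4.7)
My approach is a swap argument by contradiction. Suppose $\sigma^\ast = (L_1, L_2, \ldots)$ searches an interval $L$ (i.e.\ $L \subseteq \bigcup_t L_t$ up to a null set) while leaving an interval $L''$ with $\sup L'' < \inf L$ entirely unsearched, where $\mu(L), \mu(L'') > 0$. Because the countable family $\{L_t\}$ covers $L$, there is a period $t^\ast$ with $\mu(L \cap L_{t^\ast}) > 0$. Choose measurable subsets $A \subseteq L \cap L_{t^\ast}$ and $B \subseteq L''$ with $0 < \mu(A) = \mu(B)$, and define $\sigma' = (L_1', L_2', \ldots)$ by $L_{t^\ast}' := (L_{t^\ast} \setminus A) \cup B$ and $L_t' := L_t$ for $t \neq t^\ast$. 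This is a valid strategy: at $t^\ast$, $B \subseteq L''$ was untouched by $\sigma^\ast$, so $B \subseteq U_{t^\ast}$; and for $t > t^\ast$, the original set $L_t$ is disjoint from both $A$ (searched at time $t^\ast$ under $\sigma^\ast$) and $B$ (never touched by $\sigma^\ast$), hence $L_t \subseteq U_t'$.

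The cost comparison uses that $c$ is strictly increasing and continuous with $\inf A > \sup B$, giving $C(A) > c(\inf A)\mu(A) > c(\sup B)\mu(B) > C(B)$. Hence $C(L_{t^\ast}') = C(L_{t^\ast}) - C(A) + C(B) < C(L_{t^\ast})$, a strict decrease by $C(A)-C(B) > 0$, while per-period search costs coincide in every other period. For the payoff, the swap preserves all relevant measures: $\mu(L_t') = \mu(L_t)$ and $\mu(S_t') = \mu(S_t)$ for every $t$, so $\mu(U_t') = \mu(U_t)$ and $\hat{p}_{\sigma'}(t) = \hat{p}_{\sigma^\ast}(t)$. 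Because $\hat{j}$ is uniformly distributed on $[0,1)$ conditional on feasibility, the per-period conditional probability $\hat{p}_\sigma(t)\mu(L_t)/\mu(U_t)$ of finding the innovation agrees at every $t$ under the two strategies. Therefore $\sigma'$ yields the same expected discounted reward but strictly lower expected discounted cost, i.e.\ $V_{\sigma'}(\emptyset,1) > V_{\sigma^\ast}(\emptyset,1)$, contradicting the optimality of $\sigma^\ast$.

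The main subtlety is justifying the reuse of $\sigma^\ast$'s prescriptions for $t > t^\ast$ under $\sigma'$, since the swap changes the \emph{identity} (not the measure) of the leftover unsearched set: after the swap, $A$ is still available while $B$ is not. The crucial observation is that $\sigma^\ast$ by hypothesis never prescribes searching any portion of $L''$, so $L_t \cap B = \emptyset$ for every $t$, making the reuse automatic. Since every quantity entering the value function depends only on the Lebesgue measure of searched and unsearched sets---never on which specific subsets they are---the location of the leftover unsearched region is irrelevant, and the swap is payoff-neutral while strictly reducing cost.
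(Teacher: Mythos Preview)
Your proof is correct and follows essentially the same swap argument as the paper: replace (part of) the expensive searched set by an equal-measure piece of the cheaper unsearched set, note that all measures and hence all posteriors and success probabilities are unchanged, and conclude that expected cost strictly falls. You are in fact slightly more careful than the paper in two places---you explicitly extract a single period $t^\ast$ where a positive-measure piece of $L$ is searched, and you verify that the post-swap prescriptions $L_t$ for $t>t^\ast$ remain feasible because $\sigma^\ast$ never touches $B$ and (by optimality) never revisits $A$---but the underlying idea is identical.
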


\begin{proof}
	Suppose not. Then, there exists a set $L$, searched at some time $t$ and a set $L'$, with $\inf L> \sup L'$ and $\mu(L') = \mu(L)$, that is never searched. As $L$ is searched at time $t$,  we can write $L_t = L\cup S$ where $S$ may be empty. Then, the value function at time $t$ when searching according to ${\sigma^\ast}$ is
	\begin{align*}
		V_{\sigma^\ast}(S_t, t) =  \hat{p}_{\sigma^\ast}(t)& \frac{\mu(L\cup S)}{\mu(U_t)} v - C(L\cup S) \\ &+ \delta \left(1-\hat{p}_{\sigma^\ast}(t) \frac{\mu(L\cup S)}{\mu(U_t)}\right)V_{\sigma^\ast}(S_t\cup L\cup S,t+1).
	\end{align*}
	Consider the alternative strategy $\sigma'$, coinciding with ${\sigma^\ast}$ except that $L$ is replaced by $L'$ at time $t$. Then
	\begin{align*}
		V_{\sigma'}(S_t,t) =  \hat{p}_{\sigma'}(t)& \frac{\mu(L'\cup S)}{\mu(U_t)} v - C(L'\cup S)  \\ &+ \delta \left(1-  \hat{p}_{\sigma'}(t)\frac{\mu(L'\cup S)}{\mu(U_t)}\right)V_{\sigma'}(S_t\cup L'\cup S,t+1).
	\end{align*}
	The difference in value functions $V_{\sigma'}(S_t,t)-V_{\sigma^\ast}(S_t, t)$ reads
    	\begin{align*}
& C(L\cup S) - C(L'\cup S)\\
        & +\delta \left(1-\hat{p}_\sigma(t)\frac{\mu(L'\cup S)}{\mu(U_t)}\right)\left(V_{\sigma'}(S_t\cup L'\cup S,t+1)- V_{{\sigma^\ast}}(S_t\cup L\cup S,t+1)\right)\\
		&= C(L\cup S) - C(L'\cup S)>0.
	\end{align*}
    The equality obtains because the deviation from ${\sigma^\ast}$ to $\sigma'$ entails no change in the posteriors, i.e., $\hat{p}_{\sigma'}(t') =\hat{p}_{{\sigma^\ast}}(t') $ for all $t'$, as the measures of the searched and unsearched sets are the same with both strategies. Hence, since the two strategies coincide for any $t'>t$, we have $V_{\sigma'}(S_t\cup (L'\cup S), t+1)= V_{{\sigma^\ast}}(S_t\cup (L\cup S), t+1)$. Finally, the inequality follows from the fact that $\inf L> \sup L'$ and $c$ is increasing. This contradicts the optimality of $\sigma^\ast$.
\end{proof}

\begin{lemma} \label{lem:increasing}
	The optimal search strategy $\sigma^\ast$ satisfies: there do not exist intervals $L$ and $L''$ with $\mu(L), \mu(L'') > 0$ and $\sup L < \inf L''$ such that $L$ is searched at time $t$ and $L''$ at time $t' < t$.
\end{lemma}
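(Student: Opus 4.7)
The plan is to mimic the swap argument of Lemma~\ref{lem:nogaps}. Assuming for contradiction that such intervals exist, I would construct an alternative strategy $\sigma'$ that swaps equal-measure portions of $L$ and $L''$ between periods $t'$ and $t$, and then show that $\sigma'$ strictly improves on $\sigma^\ast$.

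Concretely, let $m := \min\{\mu(L),\mu(L'')\}>0$ and pick measurable subsets $L_1 \subseteq L$ and $L_1'' \subseteq L''$ with $\mu(L_1) = \mu(L_1'') = m$. Write $L_{t'}^{\sigma^\ast} = L'' \cup R'$ and $L_{t}^{\sigma^\ast} = L \cup R$ for the (possibly empty) remainders $R',R$. Define $\sigma'$ to coincide with $\sigma^\ast$ except at periods $t'$ and $t$, where $L_{t'}^{\sigma'} = (L'' \setminus L_1'') \cup L_1 \cup R'$ and $L_{t}^{\sigma'} = (L \setminus L_1) \cup L_1'' \cup R$. One needs to verify that the swap is well-defined, i.e., that $L_1 \cup L_1''$ is disjoint from every $L_s^{\sigma^\ast}$ with $t' < s < t$; this holds because an optimal strategy never re-examines a positive-measure set (doing so would strictly waste positive cost without informational gain).

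Because $\mu(L_s^{\sigma'}) = \mu(L_s^{\sigma^\ast})$ for every period $s$, the unsearched measures $\mu(U_s)$, the posteriors $\hat{p}_{\sigma'}(s) = \hat{p}_{\sigma^\ast}(s)$, and the per-period probabilities of finding the innovation all coincide under the two strategies. Consequently, the expected discounted reward streams agree, and the only difference between $V_{\sigma'}$ and $V_{\sigma^\ast}$ lies in the costs at periods $t'$ and $t$.

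Finally, since $c$ is strictly increasing and $\sup L < \inf L''$, the cost of any equal-measure subset of $L$ is strictly less than that of a subset of $L''$, so $\Delta := C(L_1'') - C(L_1) > 0$. The swap saves $\Delta$ at period $t'$ and incurs exactly $\Delta$ extra at period $t$, so that, measured in period-$t'$ units, the net value change is $\Delta - \delta^{t-t'} D \cdot \Delta = \Delta(1 - \delta^{t-t'} D) > 0$, where $D \in (0,1]$ is the (strategy-independent) probability of reaching period $t$ without success and $\delta^{t-t'}<1$ since $t>t'$. This contradicts the optimality of $\sigma^\ast$. The main obstacle is purely the bookkeeping around the swap, namely ensuring disjointness with intervening searches and confirming invariance of all posteriors; the value comparison itself is the elementary observation that, all else equal, it is cheaper to incur a cost sooner when that cost is also avoided in a later (discounted, stochastically reached) period.
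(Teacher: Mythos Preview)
Your proof is correct and follows essentially the same swap argument as the paper: exchange equal-measure portions of $L$ and $L''$ between periods $t'$ and $t$, observe that all posteriors and per-period success probabilities are unchanged, and conclude that the only effect is a cost saving of $\Delta$ at $t'$ against a discounted (and stochastically reached) cost increase of $\Delta$ at $t$. You are in fact slightly more careful than the paper's own proof, which neither spells out the disjointness check with intervening periods nor makes the factor $D$ for the probability of reaching period $t$ explicit.
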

\begin{proof}
	Suppose not. Then, there exist sets $L$ and $L'$ that are searched at times $t$ and $t'$ with $t>t'$, $ \sup L < \inf L''$ and $\mu(L') = \mu(L)$. Consider the alternative strategy $\sigma'$ which coincides with $\sigma^\ast$ except that $L$ and $L'$ are swapped at $t$ and $t'$, respectively. Analogously to the proof of Lemma \ref{lem:nogaps}, the posterior beliefs and success probabilities induced by the two strategies are the same. Hence, the only difference in the value functions is induced by a change in the timing of the search costs. The swap reduces the search costs at $t'$ by $K$, while increasing them by the same amount at $t$, but this increase is discounted by at least $\delta$. Thus, the net reduction in costs is at least $K - \delta K$, implying $V_{\sigma'}(S_t, t) - V_{\sigma^\ast}(S_t, t) > 0$, contradicting the optimality of $\sigma^\ast$.
\end{proof}
Lemmas \ref{lem:nogaps} and \ref{lem:increasing} imply the following: 
\begin{corollary}
	The optimal search strategy $\sigma^\ast$ is increasing-interval-based.
\end{corollary}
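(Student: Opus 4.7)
My plan is to combine Lemmas~\ref{lem:nogaps} and~\ref{lem:increasing} in three stages. First, I would apply Lemma~\ref{lem:nogaps} to deduce that the total searched set $S_\infty := \bigcup_{t \geq 1} L_t$ is, up to measure zero, an initial interval $[0, s^\ast)$ for some $s^\ast \in [0,1]$. Indeed, any ``hole'' in $S_\infty$ below its essential supremum would let one locate (via Lebesgue density) a positive-measure interval of unsearched projects strictly below a positive-measure interval of searched projects, contradicting the lemma.

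Second, I would use Lemma~\ref{lem:increasing} to order the sets $L_t$ along the index axis: whenever $t' < t$ and both $L_{t'}$ and $L_t$ have positive measure, $\mathrm{ess\,sup}\,L_{t'} \leq \mathrm{ess\,inf}\,L_t$. Otherwise one could extract a positive-measure sub-interval of $L_t$ sitting essentially below a positive-measure sub-interval of $L_{t'}$, violating the lemma. Third, combining the two previous steps, each $L_t$ coincides (up to measure zero) with $[\mathrm{ess\,inf}\,L_t,\mathrm{ess\,sup}\,L_t]$, and $\sup L_t = \inf L_{t+1}$: any point of $[\mathrm{ess\,inf}\,L_t,\mathrm{ess\,sup}\,L_t] \setminus L_t$ lies in $[0, s^\ast)$ and must be either in some other $L_{t'}$ (ruled out by Step~2) or unsearched (ruled out by Step~1), and the same argument forbids a positive-measure gap between $L_t$ and $L_{t+1}$.

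The main obstacle is a measure-theoretic subtlety: Lemmas~\ref{lem:nogaps} and~\ref{lem:increasing} are phrased in terms of intervals, whereas the $L_t$ are a priori only measurable. I would resolve this by invoking the Lebesgue density theorem to carve positive-measure sub-intervals out of any candidate counterexample measurable set, so the lemmas apply directly. The corner case of empty $L_t$ is handled trivially, since $\emptyset$ is vacuously an interval.
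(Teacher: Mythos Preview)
The paper offers no detailed argument for this Corollary; it simply asserts that Lemmas~\ref{lem:nogaps} and~\ref{lem:increasing} imply the result. Your three-step outline (no holes in $S_\infty$ below its essential supremum; ordering of the $L_t$ along the index axis; each $L_t$ fills its own convex hull) is therefore considerably more careful than the paper's treatment, and the overall architecture is exactly right.

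There is, however, a genuine gap in your proposed resolution of the measure-theoretic subtlety. The Lebesgue density theorem does \emph{not} let you ``carve positive-measure sub-intervals'' out of an arbitrary measurable set: a fat Cantor set has positive measure and density points everywhere, yet contains no interval at all. Hence if some $L_t$ happened to be nowhere dense of positive measure, you could not extract an interval $L \subseteq L_t$ to feed into Lemma~\ref{lem:nogaps} or Lemma~\ref{lem:increasing} as literally stated. The clean fix is to observe that the \emph{proofs} of those two lemmas never actually use the interval hypothesis: the swapping arguments go through verbatim for measurable subsets $L \subseteq L_t$ and $L''$ (either unsearched or contained in some $L_{t'}$) with $\sup L'' < \inf L$ and equal measure, since monotonicity of $c$ still delivers the needed cost inequality $C(L) > C(L'')$. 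With the lemmas so extended to measurable sets, your Steps~1--3 run without any appeal to density: in Step~1 you take $L'' \subseteq [0,s^\ast)\setminus S_\infty$ and $L \subseteq L_t \cap (\sup L'',1)$ directly as measurable sets, and similarly in Step~2.
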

With this we can reinterpret the strategy $\sigma=(L_1, L_2, \ldots)$ as a collection of increasing intervals, i.e., $[0, l_1), [l_1, l_2)$ and so forth.

\begin{lemma} \label{lem:no-breaks}
	The optimal search strategy $\sigma^\ast$ satisfies: if there exists a period $t$ with $\mu(L_t) = 0$, then $\mu(L_{t'}) = 0$ for all $t' > t$.
\end{lemma}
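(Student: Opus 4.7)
The plan is to argue by contradiction via a time-shifting swap, in the same spirit as Lemmas \ref{lem:nogaps} and \ref{lem:increasing}. Suppose $\sigma^\ast$ has $\mu(L_t) = 0$ at some period $t$ while $\mu(L_{t'}) > 0$ at some later $t' > t$. Let $\tilde t := \min\{s > t : \mu(L_s) > 0\}$, so that $L_s = \emptyset$ for $t \le s < \tilde t$ and $\mu(L_{\tilde t}) > 0$. Define the alternative strategy $\sigma'$ by $L'_s = L_s$ for $s < t$ and $L'_s = L_{s + (\tilde t - t)}$ for $s \ge t$; this simply advances every nontrivial search by $\tilde t - t$ periods.

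First I would unroll the Bellman recursion under $\sigma^\ast$ through the $\tilde t - t$ empty periods: each contributes zero cost, zero success probability, and leaves the state unchanged (so $S_t = S_{t+1} = \ldots = S_{\tilde t}$), whence $V_{\sigma^\ast}(S_t, t) = \delta^{\tilde t - t} V_{\sigma^\ast}(S_t, \tilde t)$. Next, observe that $\sigma'$ from time $t$ onward executes the exact sequence $L_{\tilde t}, L_{\tilde t + 1}, \ldots$ that $\sigma^\ast$ executes from $\tilde t$ onward, starting from the same state. As in the argument of Lemma \ref{lem:nogaps}, posteriors and success probabilities depend only on the Lebesgue measures of searched and unsearched sets, which agree step-by-step along the two continuations, so $V_{\sigma'}(S_t, t) = V_{\sigma^\ast}(S_t, \tilde t)$. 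Subtracting gives
\[
V_{\sigma'}(S_t, t) - V_{\sigma^\ast}(S_t, t) = \bigl(1 - \delta^{\tilde t - t}\bigr)\, V_{\sigma^\ast}(S_t, \tilde t),
\]
with a strictly positive prefactor since $\delta \in (0,1)$ and $\tilde t - t \ge 1$.

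It remains to argue $V_{\sigma^\ast}(S_t, \tilde t) > 0$, which I expect to be the most delicate step: in the earlier lemmas strictness came for free from a positive cost gap combined with $\delta < 1$, whereas here it requires the continuation from $\tilde t$ to carry strictly positive value. The agent always has the option of searching nothing thereafter (yielding value $0$), so $V_{\sigma^\ast}(S_t, \tilde t) \ge 0$ by optimality; if this inequality were equality, then truncating $\sigma^\ast$ to $L_s = \emptyset$ for all $s \ge \tilde t$ would give an equally optimal strategy that already satisfies the lemma, so we could pick that one as our $\sigma^\ast$ at the outset and the hypothesis becomes vacuous. In the remaining case $V_{\sigma^\ast}(S_t, \tilde t) > 0$, the display above yields $V_{\sigma'}(S_t, t) > V_{\sigma^\ast}(S_t, t)$, contradicting optimality of $\sigma^\ast$.
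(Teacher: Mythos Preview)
Your proof is correct and follows the same time-shifting idea as the paper's; the paper shifts forward by one period rather than by $\tilde t - t$, but this is immaterial. The only substantive difference lies in how strict positivity of the continuation value is obtained: the paper argues directly that $\mu(L_{t'})>0$ forces $V_{\sigma^\ast}(S_{t'},t')>0$ (since otherwise trimming $L_{t'}$ would strictly raise the payoff), which yields the lemma for \emph{every} optimal strategy, whereas your tie-breaking move (``replace $\sigma^\ast$ by the truncated version'') only shows that \emph{some} optimal strategy satisfies the conclusion. For the downstream use in Proposition~\ref{prop1} this is enough, but as literally stated the lemma is the universal claim, so the paper's route is the slightly sharper one.
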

\begin{proof}
	Suppose not. Then, there exists a period $t$ with $\mu(L_t)=0$ and a period $t'>t$ with $\mu(L_{t'})>0$. Consider the alternative strategy $\sigma'$ which coincides with $\sigma^\ast$ except that for all $s\geq t$ we replace $L_{s}$ by $L_{s+1}$. Then, $
		V_{\sigma'}(S_{t}, t) =  V_{\sigma^\ast}(S_{t+1},t+1)
	$
	and
	$
		V_{\sigma^\ast}(S_{t},t) = \delta V_{\sigma^\ast}(S_{t+1},t+1).
	$
Since stopping search yields a payoff of zero, it cannot be that $ V_{\sigma^\ast}(S_{t''},t'') <0$ for any $t''$. Additionally, since $\mu(L_{t'})>0$, it must be that $V_{\sigma^\ast}(S_{t'},t') > 0$, since otherwise removing some subset from $L'$ would strictly increase the payoff in $t'$. Thus, $V_{\sigma^\ast}(S_{t+1},t+1) > 0$ and since $\delta < 1$, then $V_{\sigma'}(S_{t}, t) > V_{\sigma^\ast}(S_{t},t)$.
\end{proof}

\vspace{0.5cm}
\noindent We can now prove Proposition \ref{prop1}.

\begin{proof}
	Lemma \ref{lem:no-breaks} implies that only three types of strategies can be optimal: 1) never search, 2) search in every period until time $T>0$ (unless there's a success) and then not search anymore, 3) search in every period until there is a success. Thus, we only need to show that 2) cannot be optimal. Suppose there is a $T>0$ such that the agent searches in every period $t\leq T$ and then stops. Then, the agent's value function at time $T$, the last period of search, is given by
	\begin{align*}
		V_{\sigma^\ast}(S_T, T)
  &= \frac{(l_T-l_{T-1})p}{(1-l_{T-1})p + (1-p)}  v - C([l_{T-1}, l_{T} )),
	\end{align*}
	because $V_{\sigma^\ast}(S_{T}\cup [l_{T} - l_{T-1}),T+1)=0$ and by Bayes' rule
    $
		\hat{p}_{\sigma^\ast}(T) = \frac{(1-l_{T-1})p}{(1-l_{T-1})p + (1-p)}.
    $
	Further, because it is optimal to search in period $T$, $l_T$ is implicitly defined by\footnote{Observe that the value function is strictly concave in $l_T$ as $c$ is strictly increasing.}
	\begin{align}
		\frac{\partial{V_{\sigma^\ast}(S_t, T)}}{{\partial l_T}} = 0 
		&\Leftrightarrow \frac{pv}{(1-l_{T-1})p + (1-p)} = c(l_{T}).\label{eq:last_period_search}
	\end{align}
Since $\lim_{j \rightarrow 1} c(j)= \infty$ and the LHS of equation \eqref{eq:last_period_search} is finite, it cannot be optimal that $l_{T}=1$ and the interval $[0,1)$ is searched in finite time.
 
	Thus, take $l_{T}<1$ and consider the agent's incentives in period $T+1$. In particular, consider one more period of search before stopping forever. Then, the value function of searching a positive interval, i.e., setting $l_{T+1}>l_T$ is
	\begin{align*}
		V_{\sigma'}(S_{T+1}, T+1)
		&= \frac{(l_{T+1}-l_{T})pv}{(1-l_{T})p + (1-p)}- \int_{l_T}^{l_{T+1}}c(j)dj.
	\end{align*}
	For $\sigma^\ast$ to be optimal, we need to have
	$
		\frac{pv}{(1-l_{T})p + (1-p)}-c(l_{T+1})<0, 
	$
	i.e., the marginal value of continuing search needs to be negative. Rearranging and using equation \eqref{eq:last_period_search}, we obtain
	\begin{align*}
			&	\frac{pv}{c(l_{T+1})} < 1-l_{T}p
			\Leftrightarrow \frac{c(l_{T})}{c(l_{T+1})}  < \frac{1-l_{T}p}{1-l_{T-1}p},
	\end{align*}
	which must hold for any $l_{T+1}\in (l_T, 1)$. This is violated for $l_{T+1}$ close enough to $l_T$, as the LHS converges to $1$ while the RHS is constant and strictly smaller than $1$ as $l_{T+1}$ approaches $l_T$ from above. Hence, there is an incentive to continue searching in period $T+1$; thus, stopping in $T$ cannot be optimal.
\end{proof}

\newpage

\bibliographystyle{ecta}
\bibliography{optimal_search}

\end{document}